\def\BibTeX{{\rm B\kern-.05em{\sc i\kern-.025em b}\kern-.08em
    T\kern-.1667em\lower.7ex\hbox{E}\kern-.125emX}}
\newtheorem{theorem}{Theorem}
\newtheorem{lemma}{Lemma}
\begin{document}

\title{When Does Spatial Correlation Add Value to Delayed Channel State Information?}
% How Does Correlation Information Interact\\ with Delayed Channel Knowledge?
% How does correlation information interact with partial channel knowledge?
% \title{The MSBs of the Delayed CSIT}

\author{
Alireza~Vahid\\
\and
Robert~Calderbank
% Department~of~ECE,~Duke~University
  \thanks{Alireza~Vahid and Robert~Calderbank are with the Department of ECE, Duke University. Emails: {\sffamily alireza.vahid@duke.edu}, {\sffamily robert.calderbank@duke.edu}.}
  \thanks{The work of A. Vahid and R. Calderbank was funded in part by the Air Force Office of Scientific Research under grant AFOSR FA 9550-13-1-0076.}
}

\maketitle

%%%%%%%%%%%%%%%%%%%%%%%%%%%%%%%%%%%%%%%%%%%%%%%%%%%

\begin{abstract}
Fast fading wireless networks with delayed knowledge of the channel state information have received significant attention in recent years. An exception is networks where channels are spatially correlated. This paper characterizes the capacity region of two-user erasure interference channels with delayed knowledge of the channel state information and spatially correlated channels. There are instances where spatial correlation eliminates any potential gain from delayed channel state information and instances where it enables the same performance that is possible with instantaneous knowledge of channel state. The key is an extremal entropy inequality for spatially correlated channels that separates the two types of instances. It is also shown that to achieve the capacity region, each transmitter only needs to rely on the delayed knowledge of the channels to which it is connected.
\end{abstract}

\begin{IEEEkeywords}
Erasure interference channel, delayed CSIT, capacity region, spatial correlation.
\end{IEEEkeywords}

%%%%%%%%%%%%%%%%%%%%%%%%%%%%%%%%%%%%%%%%%%%%%%%%%%%

\section{Introduction}
\label{Section:Introduction}

Interference is the main bottleneck in fast-fading mobile networks where the channels are in constant flux and the nodes may only have access to partial or delayed knowledge of the channel state. Thus, when designing transmission protocols, we ought to make the best use of the available limited knowledge at each node. Researchers have considered wireless networks in which transmitters have access to delayed knowledge of the channel state information (CSI). In particular, the delayed knowledge was used in~\cite{jolfaei1993new} to create transmitted signals that are simultaneously useful for multiple users in a broadcast channel. These ideas were then extended to different wireless networks, including the erasure broadcast channels~\cite{georgiadis2009broadcast}, leading to determination of the DoF region of broadcast channels~\cite{maddah2012completely}, and the DoF region of multi-antenna multi-user Gaussian ICs and X channels~\cite{GhasemiX1,Vaze_DCSIT_MIMO_BC,Jafar_Retrospective,vahid2015informational}.

In most prior work, channel gains are assumed to be independently and identically distributed across {\it time} and {\it space}. However, such assumptions are not realistic, and as we show later, may lead to underestimates of system capacity. There have been some results for the case where channel gains are correlated across time~\cite{yang2013degrees,chen2012degrees}, the idea being that correlation across time allows transmitters to better estimate what will happen next, and to adjust their transmission strategies accordingly. By contrast, spatial correlation has somewhat been neglected.  

Here, we study how spatial correlation of channel gains affects the capacity region and the transmission strategies of fast-fading wireless networks where transmitters have access to delayed CSI. We adopt a noiseless channel model that abstracts important facets of communication in the real world. The two-user Erasure IC model was introduced in~\cite{AlirezaBFICDelayed}, to understand the capacity region of the two-user fading ICs under delayed channel state information assumption. It was later shown that this model abstracts important aspects of wireless packet networks~\cite{vahid2014communication}, and networks with varying topologies~\cite{sun2013topological}.  In this model, the channel gains at each time are in the binary field given by Bernoulli $\mathcal{B}\left( p \right)$ distribution.

%\vspace{-3mm}
\begin{figure}[t]
\centering
\includegraphics[height = 4cm]{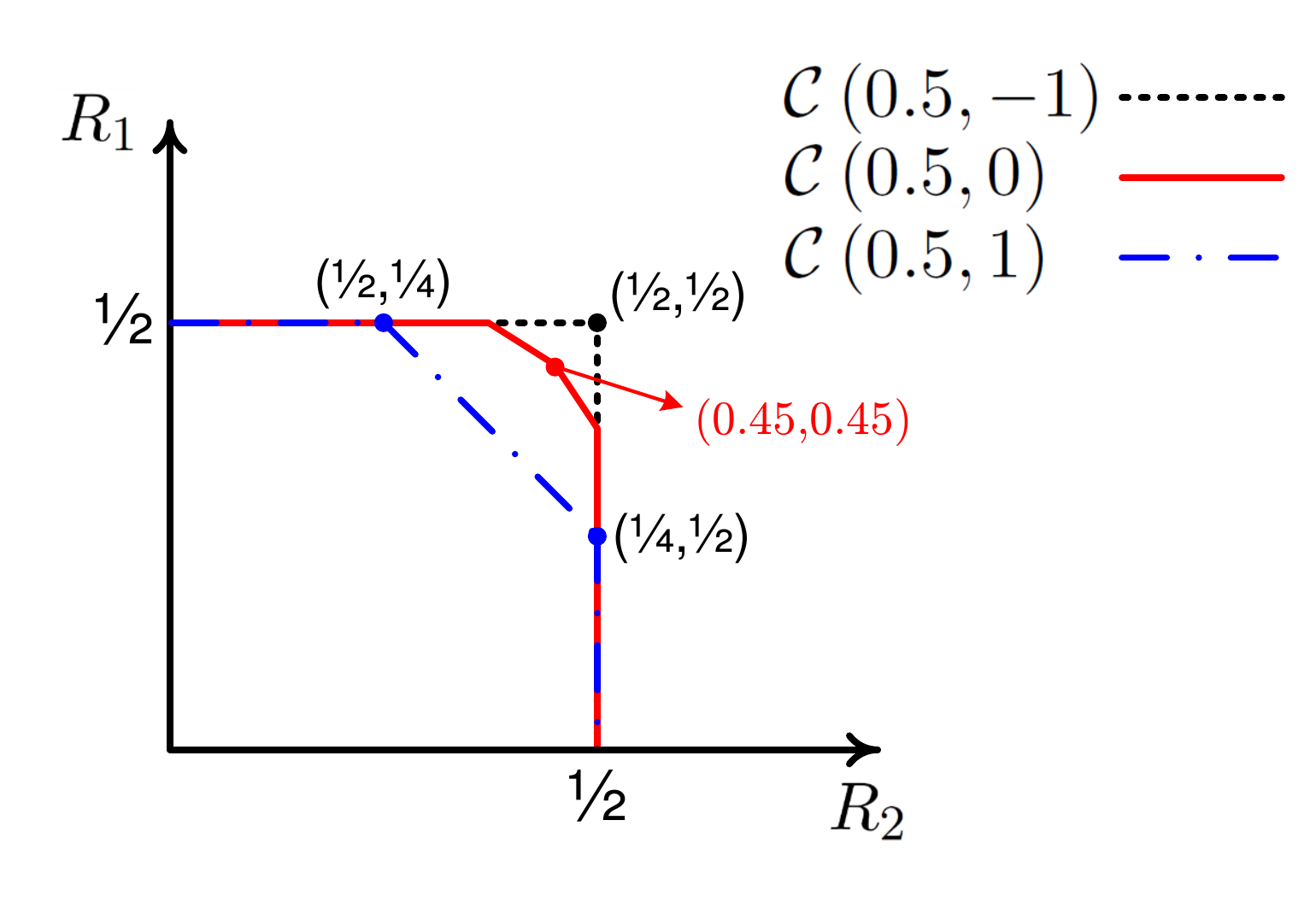}
\caption{\it Capacity region for $p =0.5$ and three different correlation coefficients. Spatial correlation can reduce the region to that of no CSIT, or improve it to that of instantaneous CSIT.\label{Fig:CapacityHalf}}
\vspace{-2.5mm}
\end{figure}

Here, we assume that the transmitters become aware of the channel realizations with unit delay. Moreover, we assume that there is certain spatial correlation between the links connected to each transmitter and that this knowledge is available to all nodes as side information. We show that spatial correlation can greatly affect the capacity region of the two-user erasure IC with delayed channel state information at the transmitters (CSIT). In fact, spatial correlation on the one hand can take away any potential gain of delayed CSIT and on the other hand, it can help us perform as well as having instantaneous knowledge. To see this dichotomy, we focus on $p =0.5$. When the two channel gains connected to each transmitter are fully correlated (\emph{i.e.} correlation coefficient of $1$), the capacity region $\mathcal{C}\left( 0.5, 1\right)$ coincides with the one where transmitters do not have any access to the channel state information (even with delay) as shown in Fig.~\ref{Fig:CapacityHalf}. At the other extreme, when the two channel gains connected to each transmitter have a correlation coefficient of $-1$, the gain of delayed CSI is accentuated and the capacity region $\mathcal{C}\left( 0.5, -1\right)$ matches that where transmitters have access to perfect and instantaneous knowledge of the channel state information. 

To derive the new outer-bounds that capture the spatial correlation of the channels, we develop an extremal entropy inequality. This inequality quantifies the extent to which a  transmitter can favor one receiver over another. We show that this inequality is tight and can be in fact achieved. We then use genie-aided arguments and apply our extremal entropy inequality to derive the outer-bounds. 

We know the value of wireless is simultaneous communication to many nodes, and that the challenge is that everyone interferes with everyone else. This intuition informs our transmission strategy which is divided into three phases. After the initial phase, each transmitter takes advantage of the delayed CSI and the correlation information to create new symbols that are  simultaneously interesting to both receivers. This mitigates the interference issue in future phases of communication and allows each transmitter to achieve the limit set by the extremal entropy inequality. % To achieve the capacity region, each transmitter only needs to rely on the delayed knowledge of the channels to which it is connected.

%%%%%%%%%%%%%%%%%%%%%%%%%%%%%%%%%%%%%%%%%%%%%%%%%%%

\section{Problem Formulation}
\label{Section:Problem}

We would like to understand the impact of spatial correlation on the capacity region of the two-user fading interference channels with delayed CSIT. To do so, we consider a noiseless erasure model introduced in~\cite{AlirezaBFICDelayed}. In the erasure model, the channel gain from transmitter ${\sf Tx}_i$ to receiver ${\sf Rx}_j$ takes values in the binary field, and at time $t$ is denoted by $G_{ij}[t] \in \{0,1\}$, $i,j \in \{1,2\}$. Channel gains are distributed as Bernoulli random variables with parameter $p$, \emph{i.e.} $G_{ij}[t] \overset{d}\sim \mathcal{B}(p)$, $i,j = 1,2$. We set $q \overset{\triangle}=1-p$. We define the channel state information at time instant $t$ to be the set
\begin{align}
G[t] \overset{\triangle}= \left \{ G_{11}[t], G_{12}[t], G_{21}[t], G_{22}[t] \right \}.
\end{align}

\vspace{-4mm}
\begin{figure}[ht]
\centering
\includegraphics[height = 3.5cm]{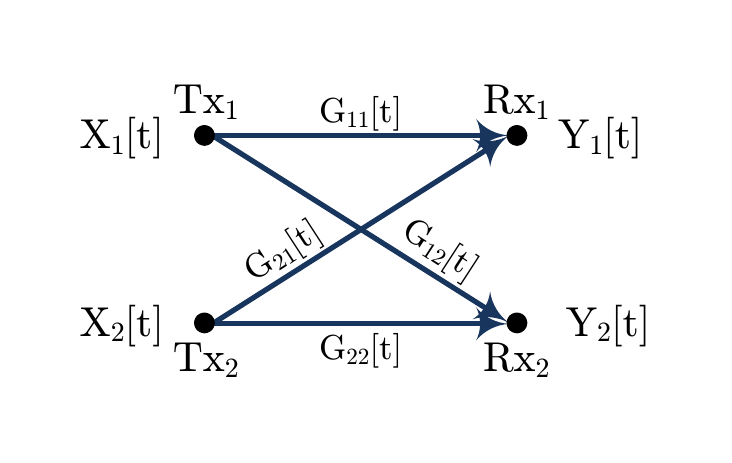}
\caption{Two-user Erasure Interference Channel.\label{Fig:detIC}}
% \vspace{-4mm}
\end{figure}
We assume that transmitter ${\sf Tx}_i$ becomes aware of the CSI with unit delay. Since each receiver only needs to decode its message at the end of the communication block, without loss of generality, we assume that each receiver has instantaneous knowledge of the channel state information.

The input-output relation of this channel at time $t$ is given by
\begin{equation} 
Y_i[t] = G_{ii}[t] X_i[t] \oplus G_{\bar{i}i}[t] X_{\bar{i}}[t], \quad i = 1, 2,
\end{equation}
where $\bar{i} = 3 - i$, $G_{ii}[t], G_{\bar{i}i}[t] \in \{ 0, 1\}$, $X_i[t] \in \{ 0, 1\}$ is the transmit signal of transmitter $i$ at time $t$, and $Y_i[t] \in \{ 0, 1\}$ is the observation of receiver $i$ at time $t$. All algebraic operations are in $\mathbb{F}_2$. 

We further assume that the channel gains are distributed independently across time, and that the channels connected to different transmitters are also independent, \emph{i.e.}
\begin{align}
G_{1j}[t] \perp G_{2\ell}[t], \qquad j,\ell \in \{ 1, 2 \}.
\end{align}
However, we assume that the channel gains corresponding to the links connected to ${\sf Tx}_i$ have a correlation coefficient $\rho$, \emph{i.e.}
\begin{align}
\rho = \frac{\mathrm{cov}\left( G_{i1}[t], G_{i2}[t] \right)}{\sigma_{G_{i1}[t]}\sigma_{G_{i2}[t]}}, \qquad i=1,2. 
\end{align}

We note that fixing $-1 \leq \rho \leq 1$ imposes a feasible set on $p$ denoted by $\mathcal{S}_{\rho}$. More pecisely, if we define
\begin{align}
\label{Eq:Pij}
p_{ij} \overset{\triangle}= \Pr\left( G_{11}[t] = i, G_{12}[t] = j \right), \qquad i,j \in \{ 0, 1 \},
\end{align}
then, $\mathcal{S}_{\rho} \subseteq \left[ 0, 1\right]$ is the set of all values for $p$ such that
\begin{equation}
\label{Eq:Range}
\left\{ \begin{array}{ll}
\vspace{1mm} 0 \leq p_{00}, p_{10}, p_{01}, p_{11} \leq 1, &  \\
\vspace{1mm} p_{10} + p_{11} = p, &  \\
\vspace{1mm} p_{01} + p_{11} = p, &  \\
\vspace{1mm} \sum_{i,j \in \{ 0, 1 \}}p_{ij} = 1, &  \\
p q \rho = p_{00} p^2 + \left( p_{01} + p_{10} \right) p q + p_{11} q^2. &
\end{array} \right.
\end{equation}
It is easy to verify that
\begin{align}
\mathcal{S}_{\rho} \overset{\triangle}= \left[ \max\left\{ 0, \frac{-\rho}{1-\rho} \right\}, \min\left\{ 1, \frac{1}{1-\rho} \right\} \right], 
\end{align}
where we set $\mathcal{S}_{1} \overset{\triangle}= \left[ 0, 1 \right]$. We note that for $\rho = -1$, we get $\mathcal{S}_{-1} = \left\{ 1/2 \right\}$.

We consider the scenario in which ${\sf Tx}_i$ wishes to reliably communicate message $\hbox{W}_i \in \{ 1,2,\ldots,2^{n R_i}\}$ to ${\sf Rx}_i$ during $n$ channel uses, $i = 1,2$. We assume that the messages and the channel gains are {\it mutually} independent and the messages are chosen uniformly. Let message $\hbox{W}_i$ be encoded as $X_i^n$ at transmitter ${\sf Tx}_i$, where at time $t$ we have $X_i[t] = f_i(\hbox{W}_i, G^{t-1})$. Receiver ${\sf Rx}_i$ is only interested in decoding $\hbox{W}_i$, and it will decode the message using the decoding function $\widehat{\hbox{W}}_i = g_i(Y_i^n,G^n)$. An error occurs when $\widehat{\hbox{W}}_i \neq \hbox{W}_i$. The average probability of decoding error is given by
\begin{equation}
\label{eq:errorterms}
\lambda_{i,n} = \mathbb{E}[P[\widehat{\hbox{W}}_i \neq \hbox{W}_i]], \hspace{5mm} i = 1, 2,
\end{equation}
and the expectation is taken with respect to the random choice of the transmitted messages $\hbox{W}_1$ and $\hbox{W}_2$. 

A rate-tuple $(R_1,R_2)$ is said to be achievable if there exist encoding and decoding functions at the transmitters and the receivers respectively, such that the decoding error probabilities $\lambda_{1,n},\lambda_{2,n}$ go to zero as $n$ goes to infinity. The capacity region for $p \in \mathcal{S}_{\rho}$, $\mathcal{C}\left( p, \rho \right)$, is the closure of all achievable rate-tuples. In the following section, we present our main results.

%%%%%%%%%%%%%%%%%%%%%%%%%%%%%%%%%%%%%%%%%%%%%%%%%%%

\section{Statement of the Main Results}
\label{Section:Main}

The following theorem describes the capacity region of the two-user erasure interference channel under delayed CSIT assumption where the links connected to the same transmitter have a correlation coefficient of $\rho$.

\begin{theorem}
\label{THM:CapacityCorrelated}
For the two-user erasure interference channel with delayed CSIT and correlated links as described in Section~\ref{Section:Problem} and for $p \in \mathcal{S}_{\rho}$\footnote{For $p=0$, the region is simply the origin.}, we have
\begin{equation}
\label{Eq:Capacity}
\mathcal{C}\left( p, \rho \right) =
\left\{ \begin{array}{ll}
\vspace{1mm} 0 \leq R_i \leq p, &  i = 1,2, \\
R_i + \beta R_{\bar{i}} \leq \beta \left( 1 - q^2 \right), & i = 1,2.
\end{array} \right.
\end{equation}
where 
\begin{align}
\label{Eq:Beta}
\beta = \frac{2p-pq\rho-p^2}{p}.
\end{align}
\end{theorem}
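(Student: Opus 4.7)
The plan is to establish the region~\eqref{Eq:Capacity} by matching inner and outer bounds. The individual constraints $R_i \leq p$ are immediate from the point-to-point erasure capacity between ${\sf Tx}_i$ and ${\sf Rx}_i$, so the real content lies in the weighted-sum constraints $R_i + \beta R_{\bar{i}} \leq \beta(1-q^2)$ and their dependence on $\rho$ through $\beta$ in~\eqref{Eq:Beta}.

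For achievability, I would use a three-phase coded-retransmission scheme. In Phase~1, each transmitter sends uncoded message bits. Using the delayed CSI and the known joint law~\eqref{Eq:Pij} of $(G_{i1}[t], G_{i2}[t])$, each transmitter sorts its transmitted bits into four categories: received only at the intended receiver, received only at the unintended one, received at both, or erased. The ``side-information'' bits, i.e., those received only at the unintended receiver, appear with probability $p_{01} = p_{10} = pq(1-\rho)$ per channel use, so their abundance decreases with $\rho$. In the later phases each transmitter forms XORs that pair one of its own side-information bits with one the other transmitter owes to its receiver, so that a single channel use simultaneously delivers a new bit to one receiver and cancels residual interference at the other. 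Counting the channel uses each phase requires and matching the resulting rate pairs to the corners of~\eqref{Eq:Capacity} yields the inner bound. This scheme relies only on each transmitter's own outgoing-link CSI, as claimed in the abstract.

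For the converse, I would combine Fano's inequality with a genie-aided argument and the extremal entropy inequality flagged in the introduction. Fano gives
\begin{equation*}
n R_i \leq I(W_i; Y_i^n, G^n) + n \epsilon_n.
\end{equation*}
Providing a suitable genie (for example, $W_{\bar{i}}$ together with a prefix of ${\sf Rx}_{\bar{i}}$'s observations) and expanding by the chain rule reduces the weighted sum $n(R_i + \beta R_{\bar{i}})$ to a single-letter expression of the form
\begin{equation*}
\sum_{t} \bigl[ H(Y_i[t] \mid \mathcal{H}_{t-1}) + \beta\, H(Y_{\bar{i}}[t] \mid Y_i[t], \mathcal{H}_{t-1}) \bigr],
\end{equation*}
where $\mathcal{H}_{t-1}$ gathers the conditioning variables. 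The extremal entropy inequality, once established, would bound each summand by $\beta(1-q^2)$ regardless of the input distribution admissible under delayed CSIT, closing the converse.

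The main obstacle will be proving the extremal entropy inequality in a form that is tight for every feasible $(p,\rho)$ and robust under the delayed-CSIT conditioning. I would reduce the statement to a per-letter claim about $(G_{i1}, G_{i2}, X_i)$ in which $X_i$ is independent of the current channel state and is allowed to depend on an arbitrary auxiliary variable; the coefficient $\beta = 2 - q\rho - p$ should then emerge from optimizing a ratio of binary entropies under the joint-law constraints~\eqref{Eq:Range}. Tightness is witnessed by the Bernoulli input used in Phase~1 of the achievability scheme, so plugging it back into the converse bound collapses the gap to exactly the region~\eqref{Eq:Capacity}.
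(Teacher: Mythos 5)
Your overall architecture (inner bound plus outer bound, with an extremal entropy inequality driving the converse) matches the paper, and your converse sketch is essentially the right route: the paper's Lemma~\ref{Lemma:Leakage} shows $H(Y_2^n|W_2,G^n) \geq \frac{1}{\beta}H(Y_1^n|W_2,G^n)$ by exactly the per-letter reduction you describe ($X_1[t]$ independent of the current state, conditioning reduces entropy), except that $\beta$ does not come from optimizing a ratio of binary entropies --- it falls out directly as the ratio of $\Pr\left( G_{11}[t]=1 \text{ or } G_{12}[t]=1\right) = 2p - pq\rho - p^2$ to $\Pr\left( G_{12}[t]=1\right) = p$, since conditioned on $W_2$ both received signals are deterministic functions of $X_1[t]$ and the channel. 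So the ``main obstacle'' you flag is in fact a two-line computation once the reduction is set up; your plan would close if you carried it out.

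The genuine gap is in your achievability step. You propose that ``each transmitter forms XORs that pair one of its own side-information bits with one the other transmitter owes to its receiver.'' That is the Maddah-Ali--Tse broadcast trick, and it is not implementable here: ${\sf Tx}_i$ has no access to $W_{\bar{i}}$ or to ${\sf Tx}_{\bar{i}}$'s transmitted bits, so no cross-transmitter XOR can be formed in an interference channel. The paper's scheme instead has each transmitter combine only its \emph{own} bits --- random linear combinations $\mathbf{C}_{i,1}Q_{i,1}\oplus\mathbf{C}_{i,2}Q_{i,2}$ of the bits that (i) were received at both receivers and (ii) were received only at the unintended receiver --- and then delivers these combinations to \emph{both} receivers via a two-multicast subchannel of symmetric rate $\frac{1-q^2}{2}$. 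Each such equation is simultaneously a fresh equation for ${\sf Rx}_i$ and an interference-cancellation equation for ${\sf Rx}_{\bar{i}}$ (which already knows a fraction $q$ of the constituent bits as side information), and this is precisely what makes local delayed CSI sufficient. Your bookkeeping of the queue statistics (e.g., $p_{01}=p_{10}=pq(1-\rho)$) is correct and is what ultimately produces $\beta$ in \eqref{Eq:Beta}, but without replacing the cross-transmitter pairing by the within-transmitter combining and the multicast phase, the counting argument you propose cannot be completed.
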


The converse proof of Theorem~\ref{THM:CapacityCorrelated} relies on an extremal entropy inequality for correlated channels that we present in Section~\ref{Section:Converse}. Using this extremal inequality and genie-aided arguments, we obtain the outer-bound. The achievability strategy at transmitter ${\sf Tx}_i$ relies solely on the delayed knowledge of the outgoing links connected to it (\emph{i.e.} at time $t$, $X_i[t]$ is a function of $W_i$, $G_{i1}^{t-1}$ and $G_{i2}^{t-1}$). The transmission strategy can include as many as three phases of communications. After each phase, transmitters use the delayed CSIT and the correlation side information to smartly retransmit equations in a way to help receivers decode their corresponding messages. The key is to combine previously transmitted bits in a way to create as many equations as possible that are of interest to both receivers. 

\begin{figure}[ht]
\centering
\includegraphics[width = \columnwidth]{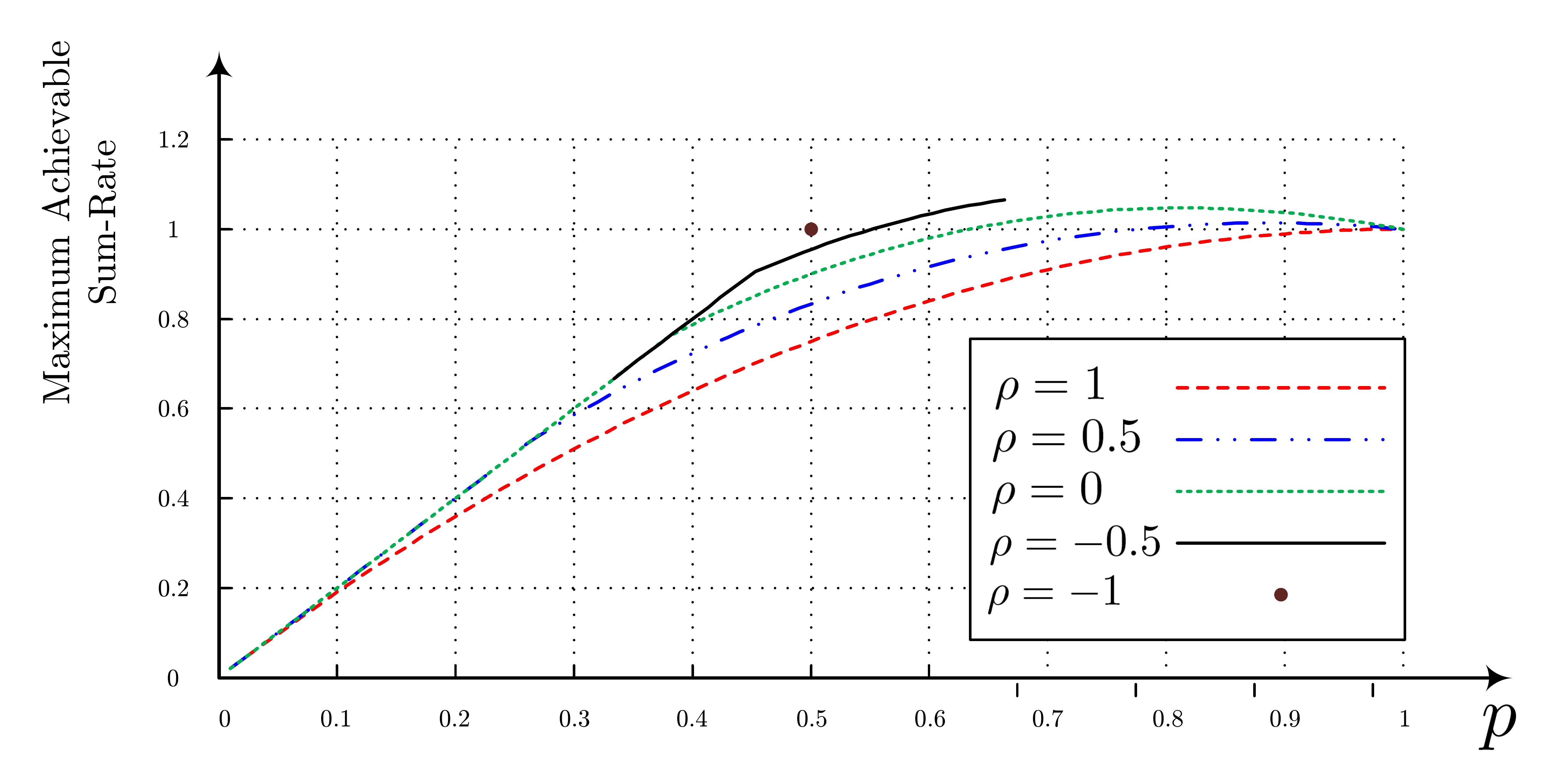}
\caption{\it Maximum achievable sum-rate for $\rho \in \{ -1, -0.5, 0, 0.5, 1 \}$ and $p \in \mathcal{S}_{\rho}$.\label{Fig:SumCap}}
\end{figure}

Before presenting the proofs, we provide further interpretation of Theorem~\ref{THM:CapacityCorrelated}. Fig.~\ref{Fig:SumCap} depicts the maximum achievable sum-rate for $\rho \in \{ -1, -0.5, 0, 0.5, 1 \}$ and $p \in \mathcal{S}_{\rho}$. For a fixed value of $p$ as $\rho$ moves from $+1$ to $-1$, the maximum achievable sum-rate improves. In fact for $p=0.5$ as discussed in the introduction and shown in Fig.~\ref{Fig:CapacityHalf}, $\mathcal{C}\left( 0.5, 1\right)$ coincides with the capacity region of the two-user erasure IC with no CSIT. On the other hand, $\mathcal{C}\left( 0.5, -1\right)$ includes $\left( R_1, R_2 \right) = \left( 0.5, 0.5 \right)$ which implies that the capacity region coincides with that of instantaneous CSIT assumption. Intuitively, this is due to the fact that with fully correlated channels, each transmitter cannot distinguish between the two receivers. However with negative correlation, a transmitter's power to favor (in terms of the received entropy) one receiver over the other improves. When channel gains are distributed independently and identically across time and space, the capacity is given by  $\mathcal{C}\left( 0.5, 0\right)$. These arguments will be made mathematically precise in the following sections.

%%%%%%%%%%%%%%%%%%%%%%%%%%%%%%%%%%%%%%%%%%%%%%%%%%%

\section{Converse Proof of Theorem~\ref{THM:CapacityCorrelated}}
\label{Section:Converse}

The derivation of the outer-bounds on the individual rates is straightforward and thus omitted. To derive the other bounds, we first present an extremal entropy lemma tailored to correlated channels with delayed CSIT.

\begin{lemma}
\label{Lemma:Leakage}
For the two-user erasure interference channel with delayed CSIT and correlated links as described in Section~\ref{Section:Problem} and for $p \in \mathcal{S}_{\rho}$ and $p \neq 0$, we have
\begin{align}
H\left( Y_2^n | W_2, G^n \right) \geq \frac{1}{\beta} H\left( Y_1^n | W_2, G^n \right),
\end{align}
where $\beta$ is given in (\ref{Eq:Beta}).
\end{lemma}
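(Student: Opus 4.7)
The plan is to sandwich $H(Y_1^n \mid W_2, G^n)$ and $H(Y_2^n \mid W_2, G^n)$ around a common quantity $Q$, by showing $H(Y_1^n \mid W_2, G^n) \le \beta Q$ and $H(Y_2^n \mid W_2, G^n) \ge Q$. The key combinatorial fact is the identity $\beta p = 1 - p_{00} = p_{11} + p_{10} + p_{01}$ — that is, $\beta p$ is exactly the probability that at least one of the two channels emanating from $\mathsf{Tx}_1$ is on at a given time slot.

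First, conditioning on $(W_2, G^n)$ makes $X_2^n$ deterministic, so $Y_i[t]$ is equivalent (up to a known shift) to $G_{1i}[t] X_1[t]$. I would apply the chain rule to $H(Y_1^n, Y_2^n \mid W_2, G^n)$ with the rich past conditioning $\mathcal{F}_{t-1} := (Y_1^{t-1}, Y_2^{t-1}, W_2, G^n)$: the per-step joint entropy equals $h(\pi_t)$ whenever $G_{11}[t] \vee G_{12}[t] = 1$ and $0$ otherwise, where $\pi_t := \Pr(X_1[t] = 1 \mid \mathcal{F}_{t-1})$, because whenever at least one outgoing channel is on both outputs collapse to $X_1[t]$, and otherwise both are zero. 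The crux is that $(G_{11}[t], G_{12}[t])$ is independent of $\pi_t$: since $X_1[t] = f_t(W_1, G^{t-1})$ depends only on strictly past CSI, $\pi_t$ is actually measurable with respect to $\sigma(Y_1^{t-1}, Y_2^{t-1}, W_2, G^{t-1})$, which is independent of $G[t]$. Combined with $\Pr(G_{11}[t] \vee G_{12}[t] = 1) = \beta p$ (using the row/column identities $p_{10} = p_{01} = p - p_{11}$ that follow from (\ref{Eq:Range})), this yields
\begin{equation*}
H(Y_1^n, Y_2^n \mid W_2, G^n) \;=\; \beta p \sum_{t=1}^n \mathbb{E}\bigl[h(\pi_t)\bigr] \;=:\; \beta Q.
\end{equation*}

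For the lower bound, applying the chain rule to $H(Y_2^n \mid W_2, G^n)$ with the coarser conditioning $(Y_2^{t-1}, W_2, G^n)$ gives $p \sum_t \mathbb{E}[h(\pi^{(2)}_t)]$ with $\pi^{(2)}_t := \Pr(X_1[t] = 1 \mid Y_2^{t-1}, W_2, G^n)$. By the tower property, $\pi^{(2)}_t = \mathbb{E}[\pi_t \mid Y_2^{t-1}, W_2, G^n]$, so concavity of binary entropy gives $\mathbb{E}[h(\pi^{(2)}_t)] \ge \mathbb{E}[h(\pi_t)]$, and hence $H(Y_2^n \mid W_2, G^n) \ge Q$. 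Combined with the trivial bound $H(Y_1^n \mid W_2, G^n) \le H(Y_1^n, Y_2^n \mid W_2, G^n) = \beta Q$, this produces $\beta H(Y_2^n \mid W_2, G^n) \ge H(Y_1^n \mid W_2, G^n)$, as claimed. The main obstacle I anticipate is the careful bookkeeping needed to justify the independence of $G[t]$ from $\pi_t$ even though $G^n$ sits inside the conditioning of $\pi_t$; the resolution is that $X_1[t]$ is determined by $(W_1, G^{t-1})$ with $W_1 \perp (W_2, G^n)$, so the future channel realizations $G[t], \ldots, G[n]$ carry no additional information about $X_1[t]$, and everything downstream is a routine application of the chain rule and Jensen's inequality.
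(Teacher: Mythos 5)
Your proof is correct and follows essentially the same route as the paper's: both hinge on the identity $\Pr\left(G_{11}[t] \vee G_{12}[t] = 1\right) = 2p - p_{11} = \beta p$ to relate $H\left(Y_2^n \mid W_2, G^n\right)$ to $\frac{1}{\beta} H\left(Y_1^n, Y_2^n \mid W_2, G^n\right)$ term-by-term via the chain rule, then drop $Y_2^n$ from the joint entropy. Your explicit use of Jensen's inequality on the binary entropy of $\pi_t$ versus $\pi_t^{(2)}$ is just an unpacked version of the paper's ``conditioning reduces entropy'' step, and your handling of the independence of $X_1[t]$ from $G[t],\ldots,G[n]$ matches the paper's step $(a)$.
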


\begin{proof} 
For time instant $t$ where $1 \leq t \leq n$, we have
\begin{align}
& H\left( Y_2[t] | Y_2^{t-1}, W_2, G^t \right) \\
& \quad = p H\left( X[t] | Y_2^{t-1}, G_{12}[t] = 1, W_2, G^{t-1} \right) \nonumber \\
& \quad \overset{(a)}= p H\left( X[t] | Y_2^{t-1}, W_2, G^t \right) \nonumber \\
& \quad \overset{(b)}\geq p H\left( X[t] | Y_1^{t-1},Y_2^{t-1}, W_2, G^t \right) \nonumber \\
& \quad = \frac{p}{2p-pq\rho-p^2} H\left( Y_1[t], Y_2[t] | Y_1^{t-1},Y_2^{t-1}, W_2, G^t \right), \nonumber 
\end{align}
where $(a)$ holds since $X[t]$ is independent of the channel realization at time instant $t$; and $(b)$ follows from the fact that conditioning reduces entropy. Therefore, we have 
\begin{align}
& \sum_{t=1}^n{H\left( Y_2[t] | Y_2^{t-1}, W_2, G^t \right)} \\
&~\geq \frac{p}{2p-pq\rho-p^2} \sum_{t=1}^n{H\left( Y_1[t], Y_2[t] | Y_1^{t-1}, Y_2^{t-1}, W_2, G^t \right)}, \nonumber 
\end{align}
and since the transmit signals at time instant $t$ are independent from the channel realizations in future time instants, we get
\begin{align}
& H\left( Y_2^n | W_2, G^n \right) \nonumber \\
& \quad \geq \frac{1}{\beta} H\left( Y_1^n, Y_2^n | W_2, G^n \right) \geq \frac{1}{\beta} H\left( Y_1^n | W_2, G^n \right).
\end{align}
\end{proof}

Using Lemma~\ref{Lemma:Leakage}, we have
\begin{align}
n &\left( R_1 + \beta R_2 \right) = H(W_1) + \beta H(W_2) \nonumber \\
& = H(W_1|W_2, G^n) + \beta H(W_2|G^n) \nonumber \\
& \overset{(\mathrm{Fano})}\leq I(W_1;Y_1^n|W_2,G^n) + \beta I(W_2;Y_2^n|G^n) + n \epsilon_n \nonumber \\
& = H(Y_1^n|W_2,G^n) - \underbrace{H(Y_1^n|W_1,W_2,G^n)}_{=~0} \nonumber \\
& \quad + \beta H(Y_2^n|G^n) - \beta H(Y_2^n|W_2,G^n) + n \epsilon_n \nonumber \\
& \overset{\textrm{Lemma}~\ref{Lemma:Leakage}}\leq \beta H(Y_2^n|G^n) + n \epsilon_n \nonumber \\
& \leq n \beta (1-q^2) + \epsilon_n.
\end{align}
Dividing both sides by $n$ and let $n \rightarrow \infty$, we get
\begin{align}
R_1 + \beta R_2 \leq \beta (1-q^2).
\end{align}

%%%%%%%%%%%%%%%%%%%%%%%%%%%%%%%%%%%%%%%%%%%%%%%%%%%

\section{Achievability Proof of Theorem~\ref{THM:CapacityCorrelated}}
\label{Section:Achievability}

In this section for $p \in \mathcal{S}_{\rho}$, we provide the achievability strategy for the maximum symmetric sum-rate point as given by
\begin{align}
\label{Eq:MaxSumRate}
R_1 = R_2 = \min\left\{ p, \frac{\beta\left( 1 - q^2 \right)}{1+\beta} \right\},
\end{align}
where ignoring the degenerate case of $p = 0$, 
\begin{align}
\beta = \frac{2p-pq\rho-p^2}{p}.
\end{align}
The achievability strategy for other corner points of the capacity region follows similar principles with special consideration to the asymmetric rates of ${\sf Tx}_1$ and ${\sf Tx}_2$.

Suppose each transmitter wishes to communicate $m$ bits to its intended receiver. It suffices to show that this task can be accomplished (with vanishing error probability as $m \rightarrow \infty$) in
\begin{align}
\max\left\{ \frac{1}{p}, \frac{1+\beta}{\beta\left( 1 - q^2 \right)} \right\} m + \mathcal{O}\left( m^{\frac{2}{3}} \right)
\end{align}
time instants. The transmission protocol that achieves capacity is divided into the three phases described below. We note that our transmission strategy here follows the general ideas of our earlier result on local delayed CSIT~\cite{vahid2015value}. However due to spatial correlation assumption, new challenges arise and the design parameters need to be carefully chosen to achieve the outer-bounds.

\begin{figure}[ht]
\centering
\includegraphics[width = 8cm]{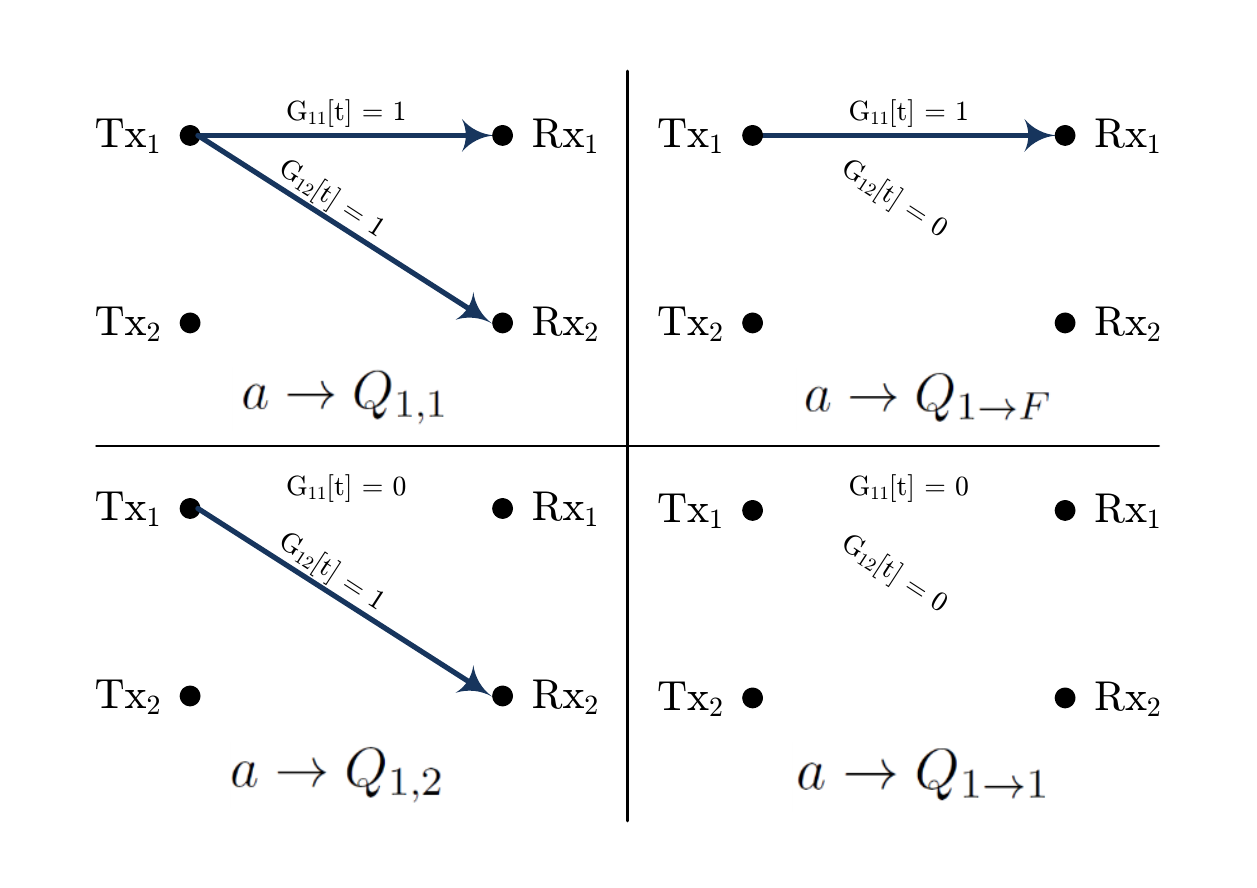}
\caption{\it Based on the values of $G_{11}[t-1]$ and $G_{12}[t-1]$, the status of packet $a$ is updated.\label{Fig:TableFig}}
\end{figure}

\noindent {\bf Phase 1}: At the beginning of the communication block, we assume that the $m$ bits at ${\sf Tx}_i$ are in a queue\footnote{We assume that the queues are column vectors and bits are placed in each queue according to the order they join in.} denoted by $Q_{i \rightarrow i}$, $i=1,2$. At each time instant $t$, ${\sf Tx}_i$ transmits a bit from $Q_{i \rightarrow i}$, and this bit will either stay in this initial queue or will transition to one of the queues listed in Fig.~\ref{Fig:TableFig}. If at time instant $t$, $Q_{i \rightarrow i}$ is empty, then ${\sf Tx}_i$, $i=1,2$, remains silent until the end of Phase 1.
\begin{enumerate}
\item [(A)] $Q_{i \rightarrow F}$: The bits for which no retransmission is required and thus we consider delivered;

\item [(B)] $Q_{i, 1}$: The bits for which at the time of communication, all channel gains known to ${\sf Tx}_i$ with unit delay were equal to $1$;

\item [(C)] $Q_{i, 2}$: The bits for which at the time of communication, we have $G_{ii}[t] = 0$ and $G_{i\bar{i}}[t] = 1$.
\end{enumerate}

For $p \in \mathcal{S}_{\rho}$, it is straightforward to see that
\begin{align}
p_{11} & = pq\rho+p^2, \nonumber \\
p_{10} & = p - pq\rho - p^2, \nonumber \\
p_{01} & = p - pq\rho - p^2, \nonumber \\
p_{00} & = 1 - p_{11} - p_{10} - p_{01}.
\end{align}

Phase~$1$ continues for 
\begin{align}
\frac{1}{p_{00}} m + m^{\frac{2}{3}}
\end{align}
time instants, and if at the end of this phase, either of the queues $Q_{i \rightarrow i}$ is not empty, we declare error type-I and halt the transmission (we assume $m$ is chosen such that $m^{\frac{2}{3}} \in \mathbb{Z}$). 

Assuming that the transmission is not halted, let $N_{i,1}$ and $N_{i, 2}$ denote the number of bits in queues $Q_{i,1}$ and $Q_{i,2}$ respectively at the end of the first phase, $i=1,2$. The transmission strategy will be halted and error type-II occurs, if any of the following events happens.
\begin{align}
\label{eq:errortypeII}
& N_{i,1} > \mathbb{E}[N_{i,1}] + 2 m^{\frac{2}{3}} \overset{\triangle}= n_{i,1}, \quad i=1,2; \nonumber \\
& N_{i,2} > \mathbb{E}[N_{i,2}] + 2 m^{\frac{2}{3}} \overset{\triangle}= n_{i,2}, \quad i=1,2.
\end{align}

From basic probability, we have
\begin{align}
\label{eq:expectedvalues}
\mathbb{E}[N_{i,1}] = \frac{p_{11}m}{p_{11}+p_{10}+p_{01}},~\mathbb{E}[N_{i,2}] = \frac{p_{01}m}{p_{11}+p_{10}+p_{01}}.
\end{align}
At the end of Phase $1$, we add $0$'s (if necessary) in order to make queues $Q_{i,1}$ and $Q_{i,2}$ of size equal to $n_{i,1}$ and $n_{i,2}$ respectively as given above, $i=1,2$.

Moreover since channel gains connected to each receiver are distributed independently, statistically a fraction of $q$ of the bits in $Q_{i,1}$ and a fraction of $q$ of the bits in $Q_{i,2}$ are known to ${\sf Rx}_{\bar{i}}$, $i=1,2$. Denote the number of bits in $Q_{i,j}$ known to ${\sf Rx}_{\bar{i}}$ by
\begin{align}
N_{i,j|{\sf Rx}_{\bar{i}}}, \qquad i,j \in \{ 1, 2 \}.
\end{align}
At the end of communication, if we have
\begin{align}
\label{Eq:KnownBits}
N_{i,j|{\sf Rx}_{\bar{i}}} < q n_{i,j} - m^{\frac{2}{3}}, \qquad i,j \in \{ 1, 2 \},
\end{align}
we declare error type-III. Note that transmitters cannot detect error type-III, but receivers have sufficient information to do so. 

Furthermore using the Bernstein inequality, we can show that the probability of errors of types I, II, and III decreases exponentially with $m$. For the rest of this subsection, we assume that Phase~1 is completed and no error has occurred.

Transmitter ${\sf Tx}_i$ creates two matrices $\mathbf{C}_{i,1}$ and $\mathbf{C}_{i,2}$, $i=1,2$, of sizes $\left( n_{i,1} + 2 m^{\frac{2}{3}} \right) \times n_{i,1}$ and $\left( n_{i,2} + 2 m^{\frac{2}{3}} \right) \times n_{i,2}$ respectively, where entries to each matrix are drawn from i.i.d. $\mathcal{B}(0.5)$ distribution. We assume that these matrices are generated prior to communication and are shared with receivers. Transmitter ${\sf Tx}_i$ does not need to know $\mathbf{C}_{\bar{i},1}$ or $\mathbf{C}_{\bar{i},2}$, $i=1,2$. Note that as $m \rightarrow \infty$, these matrices have full column-rank with probability $1$. We refer the reader for a detailed discussion on the rank of randomly generated matrices in a finite field to~\cite{bourgain2010singularity}.

\noindent {\bf Phase 2}: This phase depends on the values of $p$ and $\rho$. If $n_{i,2} \leq n_{i,1}$, transmitter ${\sf Tx}_i$ combines the bits in $Q_{i,1}$ and $Q_{i,2}$ to create $\tilde{Q}_i$ using the following equation.
\begin{align}
\tilde{Q}_i \overset{\triangle}= \left[ \mathbf{C}_{i,1} Q_{i,1} \right]_{\text{rows~}1:n_{i,2}+m^{\frac{2}{3}}} \oplus \mathbf{C}_{i,2} Q_{i,2}, \quad i=1,2.
\end{align}
However if $n_{i,2} > n_{i,1}$, transmitter ${\sf Tx}_i$ combines the bits in $Q_{i,1}$ and $Q_{i,2}$ to create $\tilde{Q}_i$ using the following equation.
\begin{align}
\tilde{Q}_i \overset{\triangle}= \mathbf{C}_{i,1} Q_{i,1} \oplus \left[ \mathbf{C}_{i,2} Q_{i,2} \right]_{\text{rows~}1:n_{i,1}+m^{\frac{2}{3}}}, \quad i=1,2.
\end{align}

Then the goal is to provide the bits in $\tilde{Q}_1$ and $\tilde{Q}_2$ to \emph{both} receivers. The problem resembles a network with two transmitters and two receivers where each transmitter ${\sf Tx}_i$ wishes to communicate an independent message $\hbox{W}_i$ to {\it both} receivers, $i=1,2$. The channel gain model is the same as described in Section~\ref{Section:Problem}. We refer to this problem as the  two-multicast problem. It is a straightforward exercise to show that for this problem, a rate-tuple of $\left( R_1, R_2 \right) = \left( \frac{\left( 1- q^2 \right)}{2}, \frac{\left( 1- q^2 \right)}{2} \right)$ is achievable.

% as depicted in Fig.~\ref{fig:two-multicast}
%\begin{figure}[ht]
%\centering
%\includegraphics[height = 3.5 cm]{}
%\caption{Two-multicast network. Transmitter ${\sf Tx}_i$ wishes to reliably communicate message $\hbox{W}_i$ to both receivers, $i=1,2$. The capacity region with no or delayed CSIT is the same.\label{fig:two-multicast}}
%\end{figure}

Transmitters encode and communicate the bits in $\tilde{Q}_1$ and $\tilde{Q}_2$ using the achievability strategy of the two-multicast problem during Phase~2. This phase lasts for 
\begin{align}
2 \left( 1 - q^2 \right)^{-1} \min \left\{ n_{i,1}, n_{i,2} \right\}, \qquad i = 1,2,
\end{align} 
time instants. We assume $\tilde{Q}_1$ and $\tilde{Q}_2$ are decoded successfully at both receivers and no error has occurred. We note that if $n_{i,2} = n_{i,1}$, the transmission strategy ends here.

\noindent {\bf Phase 3}: In this phase, the goal is to finish delivering the remaining bits from Phase~2. If $n_{i,2} < n_{i,1}$, transmitter ${\sf Tx_i}$ encodes and communicates
\begin{align}
\left[ \mathbf{C}_{i,1} Q_{i,1} \right]_{\text{rows~}n_{i,2}+m^{\frac{2}{3}}+1:n_{i,1}+2m^{\frac{2}{3}}}
\end{align}
using the achievability strategy of the two-multicast problem. On the other hand, if $n_{i,2} > n_{i,1}$, transmitter ${\sf Tx_i}$ encodes and communicates
\begin{align}
\left[ \mathbf{C}_{i,2} Q_{i,2} \right]_{\text{rows~}n_{i,1}+m^{\frac{2}{3}}+1:n_{i,2}+2m^{\frac{2}{3}}}
\end{align}
using a point-to-point erasure code of rate $p$.

\noindent {\bf Decoding}: Due to the statistics of the channel, some of the bits in $Q_{\bar{i},1}$ and $Q_{\bar{i},2}$ are already known to ${\sf Rx}_i$ (the number of known bits is given in (\ref{Eq:KnownBits})). Receiver ${\sf Rx}_i$ removes these known bits from its received signal. Then upon successful completion of Phases~2 and~3, receiver ${\sf Rx}_i$ obtains $p \left( n_{\bar{i},1} + n_{\bar{i},2} \right)$ linearly independent equations of at most $p \left( n_{\bar{i},1} + n_{\bar{i},2} \right)$ remaining bits in $Q_{\bar{i},1}$ and $Q_{\bar{i},2}$, and with probability $1$ as $m \rightarrow \infty$ it is able to cancel out the interference from ${\sf Tx}_{\bar{i}}$. After removing the interference, ${\sf Rx}_i$ has access to $m$ random linear combinations of its intended $m$ bits (with probability $1$ as $m \rightarrow \infty$). As a result, the bits intended for ${\sf Rx}_i$ can be reconstructed from the available linear combinations.

The total communication time is then equal to the length of Phases~1,~2, and~3. Thus asymptotically, the total communication time is
\begin{align}
\max\left\{ \frac{1}{p}, \frac{1+\beta}{\beta\left( 1 - q^2 \right)} \right\} m + \mathcal{O}\left( m^{\frac{2}{3}} \right)
\end{align}
time instants and that completes the achievability proof of the maximum symmetric sum-rate as given in (\ref{Eq:MaxSumRate}).

%%%%%%%%%%%%%%%%%%%%%%%%%%%%%%%%%%%%%%%%%%%%%%%%%%%
%%%%%%%%%%%%%%%%%%%%%%%%%%%%%%%%%%%%%%%%%%%%%%%%%%%
%%%%%%%%%%%%%%%%%%%%%%%%%%%%%%%%%%%%%%%%%%%%%%%%%%%

%\section{Remarks on converse with global delayed CSIT}
%\label{Section:Remarks}
%
%\input{Remarks.tex}

%%%%%%%%%%%%%%%%%%%%%%%%%%%%%%%%%%%%%%%%%%%%%%%%%%%
%%%%%%%%%%%%%%%%%%%%%%%%%%%%%%%%%%%%%%%%%%%%%%%%%%%
%%%%%%%%%%%%%%%%%%%%%%%%%%%%%%%%%%%%%%%%%%%%%%%%%%%

%\section{Discussion}
%\label{Section:Discussion}
%
%\input{Discussion.tex}

%%%%%%%%%%%%%%%%%%%%%%%%%%%%%%%%%%%%%%%%%%%%%%%%%%%
%%%%%%%%%%%%%%%%%%%%%%%%%%%%%%%%%%%%%%%%%%%%%%%%%%%
%%%%%%%%%%%%%%%%%%%%%%%%%%%%%%%%%%%%%%%%%%%%%%%%%%%

\section{Conclusion and Future Directions}
\label{Section:Conclusion}

We studied the impact of spatial correlation of channel gains on the capacity region and on the transmission strategies of fast-fading wireless networks with delayed CSIT. We quantified this impact by deriving a new extremal entropy inequality. To achieve the capacity region, we developed a novel transmission protocol, incorporating as many as three phases of communication, that takes full advantage of delayed CSI and side information about spatial correlation. A natural next step is to consider the situation where the spatial correlation is present both at the transmitter side and at the receiver side.   

%\appendices
%
%
%\section{Transmission Strategy for Corner Point $\left( 0.375, 0.5 \right)$}
%\label{Appendix:Corner}
%
%\input{CornerPoint.tex}

\bibliographystyle{ieeetr}
\bibliography{bib_FBBudget}

\end{document}